\algnewcommand\algorithmicforeach{\textbf{for each}}
\definecolor{ciaoframe}     {rgb}{  0,    0,  0.3}
\definecolor{ciaostring}    {rgb}{0.6, 0.46, 0.33}
\definecolor{ciaooperators} {rgb}{0.1, 0.15,  0.6}
\definecolor{ciaokeywords}  {rgb}{0.1, 0.15,  0.6}
\definecolor{ciaoassertions}{rgb}{0.1, 0.15,  0.6}
\definecolor{ciaotrust}     {RGB}{200, 130,     0}
\definecolor{ciaocheck}     {rgb}{0.1, 0.2,   0.8}
\definecolor{ciaochecked}   {rgb}{0.2, 0.34,  0.1}
\definecolor{ciaotrue}      {rgb}{0.2, 0.34,  0.1}
\definecolor{ciaofalse}     {rgb}{0.6,  0.0, 0.09}
\definecolor{ciaoprops}     {rgb}{0.1,  0.2,  0.8}
\definecolor{ciaocomment}   {rgb}{0.5,  0.5,  0.5}
\newcommand{\prettylstciao}[0]{
\lstset{language=Prolog,
  frameround=fttt,
  frame=ltrb,
  rulecolor=\color{ciaoframe},
  numbers=left,numberstyle=\tiny,stepnumber=1,numbersep=8pt,
  tabsize=4,
  breaklines=true,breakatwhitespace=true,
  basicstyle=\scriptsize\ttfamily, 
  showlines=true,
  showspaces=false,
  showtabs=false,
  %
  escapechar=@,
  escapeinside={~~},
  %
  commentstyle=\color{ciaocomment},
  %
  stringstyle=\color{ciaostring},
  showstringspaces=false,
  deletekeywords={true}, 
  keywordstyle={\color{ciaooperators}\bfseries}, 
  classoffset=1, 
        otherkeywords={>,<,>=,=<,.,;,-,!,=,*,\&,+,:-,[,],|,->,:,:=,\#},
        keywordstyle={\color{ciaokeywords}\bfseries},
  classoffset=2,
       morekeywords={module,use_module,dynamic,export,import,impl_defined},
       keywordstyle={\color{ciaokeywords}\bfseries},
  %
       morekeywords={pred,prop,calls,success,comp},
       keywordstyle={\color{ciaoassertions}\bfseries},
  classoffset=4,
       morekeywords={trust,trust_default,entry},
       keywordstyle={\color{ciaotrust}\bfseries},
  classoffset=5,
       morekeywords={check},
       keywordstyle={\color{ciaocheck}\bfseries},
  classoffset=6,
       morekeywords={checked},
       keywordstyle={\color{ciaochecked}\bfseries},
  classoffset=7,
       morekeywords={true},
       keywordstyle={\color{ciaotrue}\bfseries},
  classoffset=8,
       morekeywords={false},
       keywordstyle={\color{ciaofalse}\bfseries},
  classoffset=9,
       morekeywords={even,nat,int,flt,atm,term,num,var,list,ground,mshare,
                    rsize,cardinality,not_fails,exp,cost,costb,steps_ub,steps_lb,
                    size_ub,size_lb,covered,mut_exclusive,head_cost,literal_cost,
                    is_det,length,terminates,steps_o,resource,socket,seff,string
       },
       keywordstyle={\color{ciaoprops}\bfseries},
  classoffset=0, 
}}
\def\tuple#1{\langle{#1}\rangle}
\newcommand{\kbd}[1]{\mbox{\tt #1}}
\newcommand{\Q}{\mbox{$\cal Q$}}
\newcommand{\andtree}{generalized {\sc and} tree\xspace}
\newcommand{\andtrees}{generalized {\sc and} trees\xspace}
\newcommand*{\andnode}[1][G]{\ensuremath{\langle {#1},\theta^c,\theta^s
    \rangle}}
\newcommand{\csemantics}{\ensuremath{\llbracket P \rrbracket_{\Q}}}
\newcommand{\callingcontext}{\ensuremath{\mathit{calling\_context}(G,P,\Q)}}
\newcommand{\answers}{\ensuremath{\mathit{answers}}}
\newcommand{\DD}{\mbox{$D_\alpha$}}
\newcommand{\CD}{\mbox{$D$}}
\newcommand{\calls}{\ensuremath{\mathtt{calls}}\xspace}
\newcommand{\success}{\ensuremath{\mathtt{success}}\xspace}
\newcommand{\tsp}[1]{\ensuremath{\lambda^+_{TS(#1)}}\xspace}
\newcommand{\tsl}[1]{\ensuremath{\lambda^-_{TS(#1)}}\xspace}
\newcommand*{\seq}[2][n]  {{#2_{1}, \allowbreak \ldots, \allowbreak #2_{#1}}}
\newcommand{\chead}[1]{\ensuremath{#1.\mathtt{head}}\xspace}
\newcommand{\cbody}[1]{\ensuremath{#1.\mathtt{body}}\xspace}
\newcommand{\ppassrt}[2]{\ensuremath{\mathtt{#1}(#2)}\xspace}
\newcommand{\trustf}[1]{\emph{\color{orange}{#1}}}
\newcommand{\errors}{\ensuremath{E}\xspace}
\newcommand{\sample}{\texttt{sample-check}\xspace}
\newcommand{\conds}{\ensuremath{C}\xspace}
\newcommand{\cl}{\ensuremath{\mathsf{cl}}\xspace}
\newcommand{\PLAIs}{PLAI-simp\xspace}
\newcommand{\project}{abs\_project}
\newcommand{\extend}{abs\_extend}
\newcommand{\accel}{speed-up}
\newcommand{\pre}{\ensuremath{\mathit{Pre}}\xspace}
\newcommand{\post}{\ensuremath{\mathit{Post}}\xspace}
\newcommand{\head}{\ensuremath{\mathit{Head}}\xspace}
\newcommand{\status}{\ensuremath{\mathit{Status}}\xspace}
\newcommand{\cond}{\ensuremath{\mathit{Cond}}\xspace}
\newcommand{\call}{\ensuremath{\mathit{Call}}\xspace}
\newcommand{\CDP}[2]{\ensuremath{#1\text{\,:\,}#2}}
\title{Multivariant Assertion-based Guidance in Abstract
  Interpretation%
\thanks{Research partially funded by Spanish MINECO
     grant TIN2015-67522-C3-1-R \emph{TRACES}, the Madrid M141047003
     \emph{N-GREENS} program, and Spanish MECD grant FPU16/04811. We thank the
     anonymous reviewers for their useful comments.}}
\author{Isabel Garcia-Contreras\inst{1,2} \and Jose F. Morales\inst{1} \and
  Manuel V. Hermenegildo\inst{1,2}}
\institute{IMDEA Software Institute \and
  Universidad Politécnica de Madrid (UPM)}
\begin{document}

\maketitle

\vspace{-5mm}

\begin{abstract}
  Approximations during program analysis are a necessary evil, as they ensure
  essential properties, such as soundness and termination of the analysis, but
  they also imply not always producing useful results.
  Automatic techniques have been studied to prevent precision loss, typically at
  the expense of larger resource consumption.
  In both cases (i.e., when analysis produces inaccurate results and when
  resource consumption is too high), it is necessary to have some means for
  users to provide information to guide analysis and thus improve precision
  and/or performance.
  We present techniques for supporting within an abstract interpretation
  framework a rich set of assertions that can deal with
  multivariance/context-sensitivity, and can handle different run-time semantics
  for those assertions that cannot be discharged at compile time.
  We show how the proposed approach can be applied to both improving precision
  and accelerating analysis.
  We also provide some formal results on the effects of such assertions on the
  analysis results.
\end{abstract}

\begin{keywords}
  Program Analysis \and Multivariance \and Context Sensitivity \and
  Abstract Interpretation \and Assertions \and Static Analysis \and User Guidance
\end{keywords}

\vspace{-1mm}
\section{Introduction}

Abstract Interpretation~\cite{Cousot77-short} is a well-established technique
for performing static analyses to determine properties of programs. It allows
inferring at compile-time and in finite time information that is guaranteed to
hold for all program executions corresponding to all possible sets of inputs to
the program. Reasoning about these generally infinite sets of inputs and program
paths requires \emph{(safe) approximations} --computing over \emph{abstract
  domains}-- to ensure termination and soundness. If such approximations are not
carefully designed, the information reported by the analyzer may not be accurate
enough to be useful for the intended application, such as, e.g., performing
optimizations or verifying properties. Similarly, although abstract
interpretation-based analyzers are guaranteed to terminate, this does not
necessarily imply that they do so in acceptable time or space, i.e., their
resource usage may be higher than desirable.

Much work has been done towards improving both the accuracy and efficiency of
analyzers through the design of automatic analysis techniques that include
clever abstract domains, widening and narrowing
techniques~\cite{BagnaraHZ04-shortest,
  eterms-sas02-short,Zaffanella99widening-short}, and sophisticated fixpoint
algorithms~\cite{ bourd90-short, abs-int-naclp89-short, inc-fixp-sas-short,
  clpr-anal-short }.
Despite these advances, there are still cases where it is necessary for the user
to provide input to the analyzer to guide the process in order to regain
accuracy, prevent imprecision from propagating, and improve analyzer performance
~\cite{full-prolog-esop96-short, DBLP:conf/sas/DelmasS07-short}.
Interestingly, there is comparatively little information on these aspects of
analyzers, perhaps because they are perceived as internal or analyzer
implementation-specific.

In this paper we focus on techniques that provide a means for the programmer to
be able to optionally annotate program parts in which precision needs to be
recovered. Examples are the \emph{entry} and \emph{trust} declarations of
CiaoPP~\cite{full-prolog-esop96-short,assert-lang-disciplbook-short} and the
\emph{known facts} of
Astr\'{e}e~\cite{astree-esop05-short,DBLP:conf/sas/DelmasS07-short} (see
Sect.~\ref{sec:related-work} for more related work).
Such user annotations allow dealing with program constructs for which the
analysis is not complete or the source is only partially available.
However, as mentioned before, there is little information in the
literature on these assertions beyond a sentence or two in the user
manuals or some examples of use in demo sessions.  In particular, no
precise descriptions exist on how these assertions affect the
analysis process and its results.

We clarify these points by proposing a user-guided multivariant fixpoint
algorithm that makes use of information contained in different kinds of
assertions, and provide formal results on the influence of such assertions on
the analysis. We also extend the semantics of the assertions to control if
precision can be relaxed, and also to deal with both the cases in which the
program execution will and will not incorporate run-time tests for unverified
assertions. Note that almost all current abstract interpretation systems assume
in their semantics that the run-time checks will be run. However, due to
efficiency considerations, assertion checking in often turned off in production
code, specially for complex properties \cite{rtchecks-cost-2018-ppdp-short}.
To the best of our knowledge this is the first precise description of
how such annotations are processed within a standard parametric and
multivariant fixpoint algorithm, and of their effects on analysis
results.

\vspace{-4mm}
\section{Preliminaries}
\vspace{-2mm}

\paragraph{\textbf{Program Analysis with Abstract Interpretation.}}
Our approach is based on {\em abstract interpretation} \cite{Cousot77-short}, a
technique in which execution of the program is simulated on an {\em abstract
  domain} ($D_\alpha$) which is simpler than the actual, {\em concrete domain}
($D$). Although not strictly required, we assume that \DD~has a lattice
structure with meet $(\sqcap)$, join $(\sqcup)$, and less than $(\sqsubseteq)$
operators. Abstract values and sets of concrete values are related via a pair of
monotonic mappings $\langle \alpha, \gamma \rangle$: {\em abstraction} $\alpha:
D\rightarrow D_\alpha$, and {\em concretization} $\gamma: D_\alpha\rightarrow
D$, which form a Galois connection. A description (or abstract value) $d \in
\DD$ \emph{approximates} a concrete value $c \in \CD$ if $\alpha(c) \sqsubseteq
d$ where $\sqsubseteq$ is the partial ordering on \DD. Concrete operations on
$D$ values are (over-)approximated by corresponding abstract operations on
\DD~values. The key result for abstract interpretation is that it guarantees
that the analysis terminates, provided that $D_\alpha$ meets some conditions
(such as finite ascending chains) and that the results are safe approximations
of the concrete semantics (provided \DD~safely approximates the concrete values
and operations).

\vspace{-3mm}
\paragraph{\textbf{Intermediate Representation.}}
For generality, we formulate our analysis to work on a block-level intermediate
representation of the program, encoded using Constrained Horn clauses (CHC). A
{\em definite CHC program}, or {\em program}, is a finite sequence of clauses.
A {\em clause} is of the form $H \mbox{\tt :-} B_1,\dots ,B_n$ where $H$, the
{\em head}, is an atom, and $B_1,\dots ,B_n$ is the {\em body}, a possibly empty
finite conjunction of atoms. Atoms are also called {\em literals}.
We will refer to the head and the body of a clause \cl with \chead{\cl} and
\cbody{\cl} respectively.
An {\em atom} is of the form $p(\seq{V})$. It is \emph{normalized} if the
$\seq{V}$ are all distinct variables. Normalized atoms are also called {\em
  predicate descriptor}s.
Each maximal set of clauses in the program with the same descriptor as head
(modulo variable renaming) 
defines a {\em predicate} (or {\em procedure}).
Body literals can be predicate descriptors, which represent 
{\em calls} to the corresponding predicates, or {\em constraints}.
A {\em constraint} is a finite conjunction of built-in relations for some
background theory. We assume that all non-builtin atoms are normalized. This is
not restrictive since programs can always be put in this form, and it simplifies
the presentation of the algorithm. However, in the examples we use
non-normalized programs.
The encoding of program semantics in CHC depends on the source language and is
beyond the scope of the paper.
It is trivial for (C)LP programs, and also well studied
for several types of imperative programs and compilation levels (e.g., bytecode,
llvm-IR, or ISA --
see~\cite{mod-decomp-scam08-shortest,HGScam06-short,decomp-oo-prolog-lopstr07-short,
  DBLP:conf/birthday/BjornerGMR15-short, DBLP:conf/cav/GurfinkelKKN15-shorter,
  isa-energy-lopstr13-final-shortest, resources-bytecode09-short}).

\vspace{-3mm}
\paragraph{\textbf{Concrete Semantics.}}
The concrete semantics that we abstract is that of Constraint Logic Programs --
(C)LP~\cite{intro_constraints_stuckey}.
In particular, we use the constraint extension of top-down,
left-to-right SLD-resolution, which, given a {\em query} ({\em initial
  state}), returns the answers ({\em exit states}) computed for it by
the program.
A \emph{query} is a pair $\CDP{G}{\theta}$ with $G$ a (non-empty) conjunction of
atoms and $\theta$ a constraint. Executing (answering) a query with respect to a
CHC program consists on determining whether the query is a logical consequence
of the program and for which constraints (answers).
However, since we are interested in abstracting the calls and answers (states)
that occur at different points in the program, we base our semantics on the
well-known notion of \andtrees~\cite{bruy91}.
The concrete semantics of a program $P$ for a given set of queries
$\Q$, \csemantics, is then the set of \andtrees\ that result from the
execution of the queries in $\Q$ for $P$.
Each node \andnode\ in the \andtree\ represents a call to a predicate $G$ (an
atom), with the constraint (state) for that call, $\theta^c$, and the
corresponding success constraint $\theta^s$ (answer).
The \callingcontext\ of a predicate given by the predicate descriptor $G$
defined in $P$ for a set of queries $\Q$ is the set $\{ \theta^c\ |\ \exists T
\in \csemantics\ s.t.\ \exists \tuple{G',\theta'^c,\theta'^s}\ in\ T \wedge
\exists\sigma, \sigma(G') = G, \sigma(\theta'^c)=\theta^c\}$, where $\sigma$ is
a \emph{renaming} substitution, i.e., a substitution that replaces each variable
in the term it is applied to with distinct, fresh variables. We use $\sigma(X)$
to denote the application of $\sigma$ to $X$.
We denote by $\answers(P,\Q)$ the set of success constraints computed by $P$ for
queries $\Q$.

\vspace{-3mm}
\paragraph{\textbf{Goal-dependent abstract interpretation.}}
We use goal-dependent abstract interpretation, in particular a simplified
version (\PLAIs) of the PLAI
algorithm~\cite{abs-int-naclp89-short,ai-jlp-short}, which is essentially an
efficient abstraction of the \andtrees~semantics, parametric on the abstract
domain. It takes as input a program $P$, an abstract domain $D_\alpha$, and a
set of abstract initial queries $\Q_{\alpha} = \{G_i\mbox{:}\lambda_i\}$, where
$G_i$ is a normalized atom, and~$\lambda_i \in D_\alpha$ is abstract constraint.
The algorithm computes a set of triples $A =$
$\{\tuple{G_1,\lambda_{1}^c,\lambda_{1}^s},$ $\ldots,$
$\tuple{G_n,\lambda_{n}^c,\lambda_{n}^s}\}$. In each
$\tuple{G_i,\lambda_{i}^c,\lambda_{i}^s}$ triple, $G_i$ is a normalized atom,
and $\lambda_{i}^c$ and $\lambda_{i}^s$, elements of $D_\alpha$, are,
respectively, abstract call and success constraints. The set of triples for a
predicate cover
all the concrete call and success constraints that appear during execution of
the initial queries from $\gamma(Q_\alpha)$, see~Def.~\ref{def:correct}.

As usual, $\bot$ denotes the abstract constraint such that
$\gamma(\bot)=\emptyset$. A tuple $\tuple{G_j,\lambda_j^c,\bot}$ indicates that
all calls to predicate $G_j$ with any constraint $\theta\in\gamma(\lambda_j^c)$
either fail or loop, i.e., they do not produce any success constraints.
$A$ represents the (possibly infinite) set of nodes of the \andtrees for the
queries represented in $\Q_\alpha$ to $P$. In addition, $A$ is multivariant on
calls, namely, it may contain more than one triple for the same predicate
descriptor $G$ with different abstract call constraints.
The PLAI algorithm provides guarantees on termination and correctness (see
Thm.~\ref{th:basic} for a more precise formulation).

\vspace{-3mm}
\paragraph{\textbf{Assertions.}}
\label{sec:assertions}

Assertions allow stating conditions on the state (current constraint store) that
hold or must hold at certain points of program execution.
We use for concreteness a subset of the syntax of the \texttt{pred} assertions
of~\cite{full-prolog-esop96-short,prog-glob-an-shorter,assert-lang-disciplbook-short},
which allow describing sets of \emph{preconditions} and \emph{conditional
  postconditions} on the state for a given predicate.
These assertions are instrumental for many purposes, e.g. 
expressing the results of analysis, providing 
specifications,
and documenting~\cite{%
  ciaopp-sas03-journal-scp-shortest,%
  prog-glob-an-shorter,%
  assrt-theoret-framework-lopstr99-short%
}.
A \texttt{pred} assertion is of the form:

\centerline{$ \kbd{:- [}\status \kbd{] pred } \head \kbd{ [: } \pre \kbd{] [=> } \post \kbd{].} $}

\noindent
where \head is a predicate descriptor (i.e., a normalized atom) that denotes the
predicate that the assertion applies to, and \pre and \post are conjunctions of
\emph{property literals}, i.e., literals corresponding to predicates meeting
certain conditions which make them amenable to checking, such as being decidable
for any input~~\cite{assert-lang-disciplbook-short}.
\pre expresses properties that hold when \head is called, namely, at least one
\pre must hold for each call to \head. \post states properties that hold if
\head is called in a state compatible with \pre and the call succeeds. Both \pre
and \post can be empty conjunctions (meaning true), and in that case they can be
omitted.
$\status$ is a qualifier of the meaning of the assertion. Here we consider: {\tt
  trust}, the assertion represents an actual behavior of the predicate that the
analyzer will assume to be correct; {\tt check}, the assertion expresses
properties that must hold at run-time, i.e., that the analyzer should prove or
else generate run-time checks for (we will return to this in
Sect.~\ref{sec:adding}). {\tt check} is the default status of assertions.

\begin{example}
  \label{ex:pow}
  The following assertions describe different behaviors of the \kbd{pow}
  predicate that computes $\mathtt{P = X^N}$: {\tt (1)} is stating that if
  the exponent of a power is an even
  number, the result ({\tt P}) is non-negative,
  {\tt (2)} states that if the base is a non-negative number and the exponent is a
  natural number the result {\tt P} also is non-negative:
\clearpage
\prettylstciao
\begin{lstlisting}
:- pred pow(X,N,P) : (int(X), even(N)) => P ~$\geq$~ 0.  % (1)
:- pred pow(X,N,P) : (X ~$\geq$~ 0, nat(N))   => P ~$\geq$~ 0.  % (2)
pow(_, 0, 1).
pow(X, N, P) :- N > 0,
    N1 is N - 1, pow(X, N1, P0), P is X * P0.
\end{lstlisting}
Here, for simplicity we assume that the properties \texttt{even/1},
\texttt{int/1}, \texttt{nat/1}, and \texttt{$\geq$} are \emph{built-in
  properties} handled natively by the abstract domain.
\end{example}

\noindent
In addition to predicate assertions we also consider \emph{program-point
  assertions}. They can appear in the places in a program in which a literal
(statement) can be added and are expressed using literals corresponding to their
$\status$, i.e., \ppassrt{trust}{\cond} and \ppassrt{check}{\cond}. They imply
that whenever the execution reaches a state originated at the program point in
which the assertion appears, $\cond$ (should) hold. Example~\ref{ex:pp_asrt}
illustrates their use.
Program-point assertions can be translated to {\tt pred} assertions,%
\footnote{%
  E.g., we can replace line \ref{line:ex_pp_trust} in Example \ref{ex:pp_asrt},
  by ``{\tt assrt\_aux(Z)},'', and add a predicate to the program, {\tt
    assrt\_aux(\_).}, with an assertion ``{\tt :- pred assrt\_aux(Z) : Z = 2.}''.}
so without loss of generality
we will limit the discussion to {\tt pred} assertions.

\begin{definition}[Meaning of a Set of Assertions for a Predicate]
  \label{def:assr_cond}
  Given a predicate represented by a normalized atom  \head, and a
corresponding set of assertions
$\{a_1 \ldots a_n\}$, with $a_i =
``\texttt{:- pred } \head \texttt{ : } \pre_i \texttt{ => } \post_i
\texttt{.}$'' the set of \emph{assertion conditions} for \head
is $\{ C_0, C_1, \ldots , C_n\}$, with:
\vspace{-1mm}
  \[
    C_i = \left\{
    \begin{array}{ll}
      \calls(\head,\bigvee _{j = 1}^{n} \pre_j)
    & ~~~~i = 0 
    \\
      \success(\head,\pre_i,\post_i)
    & ~~~~i = 1..n
    \end{array}
    \right.
  \]
\end{definition}
\noindent
where $\calls(\head,\pre)$%
\footnote{We denote the calling conditions with \calls (plural) for historic
  reasons, and to avoid confusion with the higher order predicate in Prolog {\tt call/2}.}
states conditions on all concrete calls to the
predicate described by \head,
and
$\success(\head,\pre_j,\post_j)$ describes conditions on the success constraints
produced by calls to \head if $\pre_j$ is satisfied.

The assertion conditions for the assertions in Example~\ref{ex:pow}
are:
\[ 
   \left\{
     \begin{array}{lllll}
       \calls ( & pow(X,N,P), & ((int(X), even(N)) \vee (X\geq 0, nat(N)))), & \\
       \success (& pow(X,N,P), &  (int(X), even(N)), & (P\geq 0)), \\
       \success (& pow(X,N,P), & (X\geq 0, nat(N)), & (P\geq 0)) \\
    \end{array}
    \right\}
\]

\paragraph{\textbf{Uses of assertions.}}

We show 
examples of the use assertions to guide analysis.

\begin{example}\label{ex:pp_asrt}
  \emph{Regaining precision during analysis.}
  \noindent
  If we analyze the following program with a simple (non-relational) intervals
  domain, the information inferred for {\tt Z} would be ``any integer''
  (line~\ref{line:ex_loss}), whereas
  it can be seen
  that it is $Z=2$ for any {\tt X} and
  {\tt Y}.
  We provide the information to the analyzer with an assertion
  (line~\ref{line:ex_pp_trust}). The analyzer will \emph{trust} this information
  even if it cannot be inferred with this domain (because it cannot represent
  relations between variables). \prettylstciao
\begin{lstlisting}
p(Y) :-           % (Y > 0)
    X is Y + 2,   % (X > 2,  Y > 0)
    Z is X - Y,   % (int(Z), X > 2, Y > 0) ~\label{line:ex_loss}~
    trust(Z = 2), % (Z = 2,  X > 2, Y > 0) ~\label{line:ex_pp_trust}~
    % implementation continues
\end{lstlisting}
\end{example}

\begin{example}\emph{Speeding up analysis.}
  \noindent
  Very precise domains suffer less from loss of precision and are useful for
  proving complex properties, but can be very costly. In some cases less precise
  information in enough, e.g., this code extracted from LPdoc, the Ciao
  documentation generator, {\tt html\_escape} is a predicate that takes a string
  of characters and transforms it to html: \prettylstciao
  \begin{lstlisting}
:- trust pred html_escape(S0, S) => (string(S0), string(S)).
html_escape("``"||S0, "&ldquo;"||S) :- !, html_escape(S0, S).
html_escape("''"||S0, "&rdquo;"||S) :- !, html_escape(S0, S).
html_escape([34|S0],  "&quot;"||S)  :- !, html_escape(S0, S).
html_escape([39|S0],  "&apos;"||S)  :- !, html_escape(S0, S).
% ...
html_escape([X|S0], [X|S])          :- !, character_code(X), html_escape(S0, S).
html_escape([],[]).

% string(Str) :- list(Str, int).
\end{lstlisting}
  Analyses based on regular term languages, as, e.g.
  \kbd{eterms}~\cite{eterms-sas02-short} infer precise regular types with
  subtyping, which is often costly.
  In this example it would be equivalent to computing an accurate regular
  language that over-approximates the HTML text encoding.
  The {\tt trust} assertion provides a general invariant that the analyzer will
  take instead of inferring a more complex type.
\end{example}
\begin{example}\emph{Defining abstract usage or specifications of libraries or
    dynamic predicates}.
  When sources are not available, or cannot be analyzed, assertions can provide
  the missing abstract semantics. The following code illustrate the use of an
  assertion to describe the behavior of predicate {\tt receive} in a {\tt
    sockets} library that is written in C. The assertion in this case
  transcribes what is stated in natural language in the documentation of the
  library. Note that if no annotations were made, the analyzer would have to
  assume the most general abstraction ($\top$) for the library arguments.

  \prettylstciao
  \begin{lstlisting}
:- module(sockets, []).

:- export(receive/2).
:- pred receive(S, M) : (socket(S), var(M)) => list(M, utf8).
% receive is written in C
\end{lstlisting}
\end{example}
\begin{example}\emph{(Re)defining the language semantics for abstract domains.\label{ex:product}}
   \noindent
   \kbd{trust} assertions are also a useful tool for defining the
   meaning (transfer function)
   of the basic operations of the language. In this
   example we define some basic properties of the product predicate
   in a simple types-style abstract domain:
  \prettylstciao
  \begin{lstlisting}
:- trust pred '*'(A, B, C) : (int(A), int(B)) => int(C).
:- trust pred '*'(A, B, C) : (flt(A), int(B)) => flt(C).
:- trust pred '*'(A, B, C) : (int(A), flt(B)) => flt(C).
:- trust pred '*'(A, B, C) : (flt(A), flt(B)) => flt(C).
  \end{lstlisting} 
  The semantics of bytecodes or machine instructions can be specified for each
  domain after transformation into CHCs. Assertions allow representing behaviors
  for the same predicate for different call descriptions (multivariance).
\end{example}

\vspace{-4mm}
\section{Basic fixpoint algorithm}
\vspace{-2mm}

We first present a basic, non-guided algorithm to be used as starting point
--see Fig.~\ref{alg:naive}. \PLAIs is essentially the PLAI
algorithm~\cite{ai-jlp-short}, but omitting some optimizations that are
independent from the issues related with the guidance. The algorithm is
parametric on the abstract domain \DD, given by implementing the
domain-dependent operations $\sqsubseteq, \sqcap, \sqcup,$ {\tt abs\_call}, {\tt
  abs\_proceed}, {\tt abs\_generalize}, {\tt \project}, and {\tt \extend} (which
will be described later), and transfer functions for program built-ins, that
abstract the meaning of the basic operations of the language. These operations
are assumed to be monotonic and to correctly over-approximate their
correspondent concrete version.
As stated before, the goal of the analyzer is to capture the behavior of each
procedure (function or predicate) in the program with a set $A$ of triples
$\tuple{G,\lambda^c,\lambda^s}$, where $G$ is a normalized atom and $\lambda^c$
and $\lambda^s$ are, respectively, the abstract call and success constraints,
elements of $D_\alpha$.
For conciseness, we denote looking up in $A$ with a partial function $a: Atom *
\DD \mapsto \DD$, where $\lambda^s = a[G,\lambda^c] ~\mathit{iff}~ \tuple{G,
  \lambda^c, \lambda^s} \in A$, and modify the value of $a$ for $(G,\lambda^c)$,
denoted with $a[G,\lambda^c] \leftarrow \lambda^{s'}$ by removing $\tuple{G,
  \lambda^c, \_}$ from $A$ and inserting $\tuple{G, \lambda^c, \lambda^{s'}}$.
In $A$ there may be more than one triple with the same $G$, capturing
multivariance, but only one for each $\lambda^c$ during the algorithm's
execution or in the final results.
  
\paragraph{\textbf{Operation of the algorithm.}}

\newcommand{\changes}{\ensuremath{changes}\xspace}
\begin{figure}[t]
    \textsc{Algorithm} \textbf{\textsc{Analyze}}$(P, \Q_\alpha)$
    
    \textbf{input:} $P, \Q_\alpha$
    \textbf{output global:} $A \leftarrow \emptyset$

    \begin{algorithmic}[1]
      \State $a[L_i,\lambda_i] \leftarrow \bot$ {\bf for all} $\CDP{L_i}{\lambda_i} \in
      \Q_\alpha $, \changes $\leftarrow$ true
      \Comment{Initial queries}
      
      \While{\changes}

      \State \changes $\leftarrow$ false
      \State $W \leftarrow \{(G,\lambda^c,\cl)\ |\
      a[G,\lambda^c]$ is defined $\wedge\ \cl \in P\ \wedge \exists
      \sigma$ s.t. $G = \sigma(\cl.{\tt head})\}$
      
      \ForEach{$(G,\lambda^c,\cl) \in W$}
      \State $\lambda^t \leftarrow$ {\tt abs\_call}$(G,\lambda^c,\cl.{\tt head})$
      \State $\lambda^t \leftarrow$ {\tt solve\_body}$(\cl.{\tt body}, \lambda^t)$
      \State {$\lambda^{s_0} \leftarrow$ {\tt abs\_proceed}$(G,\cl.{\tt
          head},\lambda^t)$}
      \State $\lambda^{s'} \leftarrow {\tt
        abs\_generalize}(\lambda^{s_0}, \{a[G,\lambda^c]\})$

      \If{$\lambda^{s'} \neq \lambda^s$}
      \State $a[G,\lambda^c] \leftarrow \lambda^{s'}$, $\changes \leftarrow$ true
      \Comment{Fixpoint not reached yet}
      \EndIf \EndFor
      \EndWhile
      
      \Function{\tt solve\_body}{$B, \lambda^t$}
      \ForEach {$L \in B$}
      \State {$\lambda^c \leftarrow$ {\tt \project}$(L,\lambda^t)$}
      \State $\call = \{\lambda\ |\ a[H,\lambda'] $ is defined $\wedge\ \exists
      \sigma$ s.t. $\sigma(H)=L \wedge \lambda = \sigma(\lambda') \}$
      
      \State $\lambda^{c'} \leftarrow$ {\tt abs\_generalize}$(\lambda^c, \call)$
    \State $\lambda^s \leftarrow $ {\tt solve}$(L,\lambda^{c'})$
    \State $\lambda^t \leftarrow$ {\tt \extend}$(L,\lambda^s,\lambda^t)$
      \EndFor
      \State \Return $\lambda^t$

      \EndFunction
    \end{algorithmic}
\vspace*{-2mm}
        \caption{Baseline fixpoint analysis algorithm (\PLAIs).}\label{alg:naive}
\vspace*{-7mm}
\end{figure}
    
Analysis proceeds from the initial abstract queries
$\Q_\alpha$ assuming $\bot$ as under-approximation of their success constraint. 
The algorithm iterates over possibly incomplete results (in $A$), recomputing
them with any newly inferred information, until a global fixpoint is reached
(controlled by flag $\changes$).
First, the set of captured call patterns and the clauses whose head applies
(i.e., there exists a renaming $\sigma$ s.t.\ $G = \sigma(\cl.{\tt head})$) is
stored in $W$.
Then, each clause is solved with the following process. An ``abstract
unification'' ({\tt abs\_call}) is made, which performs the abstract parameter
passing. It includes \emph{renaming} the variables, abstracting the parameter
values (via function $\alpha$), and \emph{extending} the abstract constraint to
all variables present in the head and the body of the clause.
To abstractly execute a clause the function {\tt solve\_body} abstractly
executes each of the literals of the body. This implies, for each literal, {\em
  projecting} the abstract constraint onto the variables of the literal ({\tt
  \project}) and {\em generalizing} it if necessary ({\tt abs\_generalize})
before calling {\tt solve}. Generalization is necessary to ensure termination
since we support multivariance and infinite domains. Lastly, after returning
from {\tt solve} (returning from the literal call), {\tt \extend} propagates the
information given by $\lambda^s$ (success abstract constraint over the variables
of $L$) to the constraint of the variables of the clause $\lambda^t$.
The {\tt solve} function executes abstractly a literal (Fig.~\ref{fig:solve}).
Depending on the nature of the literal, different actions will be
performed.
For \emph{built-in operations}, the corresponding transfer function ($f^\alpha$)
is applied. For \emph{predicates defined in the program}, the answer is first
looked up in $A$.
If there is already a computed tuple that matches the abstract call, the
previously inferred result is taken. Else (no stored tuple matches the abstract
call), an entry with that call pattern and $\bot$ as success value is
added. 
  This will
  trigger the analysis of this call in the next iteration of the loop.
\begin{figure}[t]
  \textbf{Global:} $A$
  \begin{algorithmic}[1]
    \Function {\tt solve}{$L, \lambda$}

    \If{$L$ is a {\em built-in}}
    \State \Return $f^\alpha(L,\lambda)$ \Comment{apply transfer function}
    \ElsIf{$a[G,\lambda^c]$ is defined and $\exists \sigma$ s.t. $\sigma(G) = L$}
    \State \Return {$\sigma(a[G,\lambda^c])$}
    \Else 
    \State $a[L, \lambda] \leftarrow \bot$ \label{alg:ins_call}
    \State \Return $\bot$
    \EndIf
    \EndFunction
  \end{algorithmic}
\vspace*{-2mm}
  \caption{Pseudocode for solving a literal.}
  \label{fig:solve}
\vspace{-4mm}
\end{figure}
Once a body is processed, the actions of {\tt abs\_call} have to be
undone in {\tt abs\_proceed}, which performs the ``abstract return''
from the clause. It \emph{projects} the temporary abstract
constraint (used to solve the body) back to the variables in the
head of the clause and \emph{renames} the resulting abstract
constraint back to the variables of the analyzed head.
The result is then abstractly \textit{generalized} with
the previous results (either from other clauses that also unify or
from previous results of the processed clause), and it is compared
with the previous result to check whether the fixpoint was
reached. Termination is ensured even in the case of domains with
infinite ascending chains because {\tt abs\_generalize} includes
performing a widening if needed, in addition to the join operation $\sqcup$.
This process is repeated for all the tuples of the analysis until the
analysis results are the same in two consecutive iterations.

Fig.~\ref{fig:fact} shows a factorial program and an analysis result $A$ for
$\Q_\alpha = \{\mathtt{fact}(X,R):\top\}$ with an abstract domain that keeps
information about signs for each of the program variables with values of the
lattice shown. For example, the first tuple in $A$ states that {\tt fact(X,R)}
may be called with any possible input and, if it succeeds, $X$ will be an
integer and $R$ will be a positive number.

\begin{figure}[t]
\vspace*{-2mm}
\centering
\hspace{-3mm}  \begin{minipage}{0.21\linewidth}
    \prettylstciao
\begin{lstlisting}
~\label{line:base}~fact(0,1).
~\label{line:rec}~fact(N,R) :- 
    N > 0,
    N1 is N - 1,
    fact(N1, R1),
    R is N * R1.
\end{lstlisting}  
  \end{minipage}
    \hspace{5mm}
  \begin{minipage}{0.4\linewidth}
    \begin{tabular}{llll}
      & $A = $ & &
      \\ $\{$ & $\langle fact(X,R),$ & $(X/\top, R/\top),$ & $(X/int, R/+)\rangle$
      \\ & $\langle fact(X,R),$ & $(X/int, R/\top),$ & $(X/int, R/+)  \rangle\}$ \\
\end{tabular}
  \end{minipage}
\hspace{13mm}
\begin{minipage}{0.2\linewidth}
  \begin{tikzpicture}[node distance=0.8cm]
    \node(top)                           {$\top$};
    \node(int)       [below of=top] {$int$};
    \node(z)      [below of=int]  {$0$};
    \node(l)      [below left of=int]       {$-$};
    \node(p)      [below right of=int]       {$+$};
    \node(bot)            [below of=z]     {$\bot$};
    \draw(top)       -- (int);
    \draw(int)       -- (z);
    \draw(int)       -- (p);
    \draw(int)       -- (l);
    \draw(l)      --  (bot);
    \draw(p)      --  (bot);
    \draw(z)      --  (bot);
  \end{tikzpicture}
\end{minipage}
\vspace{-4mm}
\caption{Factorial program and a possible analysis result.\label{fig:fact}}
\vspace{-7mm}
\end{figure}

We define analysis results to be correct if the abstract call
constraints cover all the call constraints (and, respectively, the
abstract success constraints cover all the success constraints) which
appear during the concrete execution of the initial queries in
$\Q$. Formally:
\begin{definition}[Correct analysis]\label{def:correct}
  Given a program $P$ and initial queries $\Q$, an analysis result $A$ is
  correct for $P, \Q$ if:
  \begin{itemize}
  \item $\forall G, \theta^c \in \callingcontext$
    $\exists \tuple{G,\lambda^c,\lambda^s}\in A$
    s.t.\ $\theta^c \in \gamma(\lambda^c)$.
  \item $\forall \tuple{G,\lambda^c,\lambda^s} \in A, \forall
    \theta^c\in\gamma(\lambda^c)$ if $\theta^s \in \answers(P,
    \{\CDP{G}{\theta^c}\})$
    then $\theta^s\in\gamma(\lambda^s)$.
  \end{itemize}  
\end{definition}

\noindent
We recall the result from~\cite{ai-jlp-short}, adapted to the notation used in this paper.
\begin{theorem}{\bf Correctness of PLAI.}\label{th:basic}
  Consider a program $P$ and a set of initial abstract queries $\Q_\alpha$.
  Let $\Q$ be the set of concrete queries: $\Q = \{\CDP{G}{\theta} \mid\
  \theta\in\gamma(\lambda) \wedge \CDP{G}{\lambda} \in \Q_\alpha\}$.
  \noindent
  The analysis result $A = \{\tuple{G_1,\lambda_1^c, \lambda_1^s}, \ldots,
  \tuple{G_n,\lambda_n^c,\lambda_n^s}\}$ for $P$ with
  $\Q_\alpha$ is \emph{correct} for $P, \Q$.
\end{theorem}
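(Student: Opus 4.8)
The plan is to prove both items of Definition~\ref{def:correct} by reducing correctness of the algorithm to two ingredients: (i) \emph{local soundness} of the parametric domain operations, which we are given by hypothesis, and (ii) the fact that, upon termination, $A$ is a \emph{simultaneous post-fixpoint} of the abstract semantic operator implicitly defined by the body of the \textbf{while} loop. Since the concrete semantics \csemantics\ is a set of \andtrees, I would phrase the argument as a mutual induction that couples call-correctness (first item) and success-correctness (second item), inducting on the structure of the concrete derivations rather than on the iteration count of the algorithm.

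First I would record the local soundness lemmas that follow directly from the assumption that {\tt abs\_call}, {\tt \project}, {\tt \extend}, {\tt abs\_proceed}, {\tt abs\_generalize}, $\sqcup$, and each transfer function $f^\alpha$ are monotonic and over-approximate their concrete counterparts. Concretely: {\tt abs\_call} over-approximates concrete head unification together with the extension of the store to the clause variables; $f^\alpha$ over-approximates concrete evaluation of built-ins; {\tt \project} and {\tt \extend} over-approximate the restriction/extension of the store around a body literal; {\tt abs\_proceed} over-approximates the abstract return (projection onto and renaming to the head variables); and {\tt abs\_generalize} only enlarges the concretization (it adds $\sqcup$ and, possibly, a widening). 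I would then argue termination (finitely many predicates, generalization bounding the number of distinct call patterns, and the per-pattern success constraints forming an ascending chain stabilized by widening), so that $A$ is well defined and, with $\changes = \mathit{false}$, satisfies the fixpoint equation: for every $\tuple{G,\lambda^c,\lambda^s}\in A$ and every clause whose head unifies with $G$, abstractly solving the body yields a value subsumed by $\lambda^s$. A crucial closure property obtained from the {\tt solve} function is that $A$ is closed under the calls it generates: every call pattern arising in {\tt solve\_body} is already present in $A$.

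With these in place I would run the mutual induction. For call-correctness I induct on the depth of a node $\andnode$ in a tree $T \in \csemantics$. The base case is the root, covered by the initialization in line~1 of \textbf{Analyze}, whose concretization is exactly \Q. For the inductive step, a child call arises from some body literal $L$ of the clause executed at its parent; by the induction hypothesis the parent call is covered by some $\lambda^c \in A$, and by success-correctness of the body literals preceding $L$ (the other half of the induction, on strictly smaller subderivations) the concrete store reaching $L$ is over-approximated; local soundness of {\tt abs\_call}, $f^\alpha$, {\tt \extend} and {\tt \project} then shows that the abstract call $\lambda^{c'}$ computed for $L$ covers the child call, and the closure property places a matching triple in $A$. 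For success-correctness I induct on the height of the finite \andtree\ that produces a given answer $\theta^s$ for a concrete call $\theta^c\in\gamma(\lambda^c)$: the answer comes from one clause, and tracking the store through its body literals, using the induction hypothesis to replace each literal's concrete answer by the stored $\lambda^s$ (justified by call-correctness of that literal plus success-correctness on its smaller subderivation), local soundness of {\tt abs\_proceed} and the fixpoint equation give $\theta^s \in \gamma(\lambda^s)$.

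The main obstacle is organizing this as a single well-founded induction: call-correctness for a literal depends on success-correctness of its left siblings, while success-correctness depends on call-correctness of the same literal, and values are read from $A$ that may have been computed at arbitrary earlier iterations. The resolution is precisely to induct on the concrete derivation structure (node depth for calls, subtree height for answers) rather than on algorithm time, and to exploit the fixpoint property: because at termination every triple already subsumes its one-step recomputation, $A$ can be treated as a consistent simultaneous solution, so the correctness of a value read from $A$ never depends on \emph{when} it was computed, only on the smaller concrete derivation it must cover.
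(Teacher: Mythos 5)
The paper offers no proof of Thm.~\ref{th:basic} at all: it is explicitly recalled from the cited PLAI work~\cite{ai-jlp-short} (``We recall the result \ldots adapted to the notation used in this paper''), so the comparison here is really between your reconstruction and the classical correctness argument for goal-dependent abstract interpretation in that literature. Your plan matches that argument in its essentials: local soundness of the parametric domain operations (which the paper assumes as hypotheses on \DD), termination of the iteration, the post-fixpoint property of $A$ once $\mathit{changes}$ is false, closure of $A$ under the call patterns the analysis generates, and a mutual induction over concrete derivation structure (depth of a node for call-correctness, height of the answering subderivation for success-correctness), with the post-fixpoint property used to decouple the correctness of a value read from $A$ from the iteration at which it was written. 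This is the standard route, and what it buys relative to the paper is a self-contained proof for the simplified algorithm \PLAIs of Fig.~\ref{alg:naive} rather than an appeal to a result established for the full PLAI algorithm, which differs in its dependency-tracking machinery.

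One point must be pinned down, because it is exactly where the simplified pseudocode is weaker than the algorithm the cited theorem is about. Your closure property --- every call pattern arising in \texttt{solve\_body} is already present in $A$ --- is true but not sufficient: you also need that every tuple of $A$, including those inserted by \texttt{solve} with success $\bot$ (line~\ref{alg:ins_call} of Fig.~\ref{fig:solve}), has had all of its matching clauses processed before termination. As literally written, \texttt{solve} does not set $\mathit{changes}$ when it inserts a new entry, so the while loop could exit in an iteration whose only effect is such an insertion, leaving a never-analyzed tuple with success $\bot$; that would falsify the second item of Def.~\ref{def:correct} whenever the corresponding concrete calls have answers. The paper's prose (``This will trigger the analysis of this call in the next iteration of the loop'') shows the intent that insertions count as changes; your proof should state this explicitly as an assumption (or as a one-line patch to the algorithm), after which your post-fixpoint equation legitimately quantifies over \emph{all} tuples of $A$. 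With that fixed, and with the well-founded measure of the mutual induction spelled out (e.g., lexicographic on the number of resolution steps of the subderivation and the position of the literal in the body), your argument goes through.
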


\vspace{-4mm}
 \section{Adding assertion-based guidance to the algorithm}
\vspace{-2mm}
\label{sec:adding}

We now address how to apply the guidance provided by the user in the analysis
algorithm. But before that we make some observations related to the run-time
behavior of assertions.

\paragraph{\textbf{Run-time semantics of assertions.}}\label{sec:runtime-sem}
Most systems make assumptions during analysis with respect to the run-time
semantics of assertions: for example, Astr\'{e}e assumes that they are always
run, while CiaoPP assumes conservatively that they may not be (because in
general they may in fact be disabled by the user). In order to offer the user
the flexibility of expressing these different situations we introduce a new
status for assertions, \sample, as well as a corresponding program-point
assertion, \kbd{\sample\hspace*{-2mm}($\cond$)}.
This \sample status indicates that the properties in these assertions
may or may not be checked during execution, i.e., run-time checking
can be turned on or off (or done intermittently)
for them.
In contrast, for \kbd{check} assertions (provided that they have not been
discharged statically) run-time checks must always be performed.
\begin{table}[t]
  \centering
  \begin{tabular}{|r|c|c|}
    \hline
    \status & Use in analyzer & Run-time test  \\
             & & (if not discharged at compile-time) \\
    \hline
    {\tt trust} & yes & no  \\
    {\tt check} & yes & yes \\
    \sample     & no  & optional \\
    \hline
  \end{tabular}
  \caption{Usage of assertions during analysis.}\label{tab:runtime}
  \vspace*{-7mm}
\end{table}
Table \ref{tab:runtime} summarizes this behavior with respect to whether
run-time testing will be performed and whether the analysis can ``trust'' the
information in the assertion, depending on its status.
The information in {\tt trust} assertions is used by the analyzer but
they are never checked at run time. {\tt check} assertions are also
checked at run time and the execution will not pass beyond that point
if the conditions are not met.\footnote{This strict run-time semantics
  for {\tt check} assertions was used in \cite{optchk-journal-scp-short}.}
This means that {\tt check} assertions can also be ``trusted,'' in a
similar way to {\tt trust} assertions, because execution only
proceeds beyond them if they hold.
Finally, \sample assertions may or may not be checked at run-time (e.g., for
efficiency reasons) and thus they cannot be used as \kbd{trust}s during
analysis.

\paragraph{\textbf{Correctly applying guidance.}}

We recall some definitions (adapted from \cite{
  assrt-theoret-framework-lopstr99-short}) which are instrumental to correctly
approximate the properties of the assertions during the guidance.

\begin{definition}[Set of Calls for which a Property Formula Trivially
  Succeeds (Trivial Success Set)]
  \label{def:trivial-suc-set}
  Given a conjunction $L$ of property literals and the definitions for each of
  these properties in $P$, we define the {\em trivial success set} of $L$ in
  $P$ as:\\
  \centerline{ $ TS(L,P)= \{ \theta|Var(L) \ s.t.\ \exists \theta'\in
    \answers(P,\{\CDP{L}{\theta}\}), \theta\models\theta' \} $ }
\end{definition} 
where $\theta|Var(L)$ above denotes the projection of $\theta$ onto the
variables of $L$, and $\models$ denotes that $\theta'$ is a more general
constraint than $\theta$ (entailment). Intuitively, $TS(L,P)$ is the set of
constraints $\theta$ for which the literal $L$ succeeds without adding new
constraints to $\theta$ (i.e., without constraining it further).
For example, given the following program $P$:
  \prettylstciao
\begin{lstlisting}
list([]).
list([_|T]) :- list(T).
\end{lstlisting}
  and $L=list(X)$, both $\theta_1 = \{ X = [1,2] \}$ and $\theta_2 = \{ X =
  [1,A] \}$ are in the trivial success set of $L$ in $P$, since calling
  $(X=[1,2],list(X))$ returns $X=[1,2]$ and calling $(X=[1,A],list(X))$ returns
  $X=[1,A]$. However, $\theta_3 = \{ X = [1|\_] \}$ is not, since a call to $(X
  = [1|Y], list(X))$ will further constrain the term $[1|Y]$, returning $X =
  [1|Y], Y=[]$.
  We define abstract counterparts for Def.~\ref{def:trivial-suc-set}:

\begin{definition}[Abstract Trivial Success Subset of a Property Formula]
  \label{def:abstract-trivial-suc-set}
  Under the same conditions of Def. \ref{def:trivial-suc-set}, given an abstract
  domain $\DD$, $\lambda^-_{TS(L,P)} \in \DD$ is an {\em abstract trivial
    success subset} of $L$ in $P$ iff $\gamma(\lambda^-_{TS(L,P)})\subseteq
  TS(L,P)$.
\end{definition}

\begin{definition}[Abstract Trivial Success Superset of a Property Formula]
  Under the same conditions of Def.~\ref{def:abstract-trivial-suc-set}, an
  abstract constraint $\lambda^+_{TS(L,P)}$ is an {\em abstract trivial success
    superset} of $L$ in $P$ iff $\gamma(\lambda^+_{TS(L,P)})\supseteq TS(L,P)$.
\end{definition}
\noindent
I.e., $\lambda^-_{TS(L,P)}$ and $\lambda^+_{TS(L,P)}$ are, respectively,
safe under- and over-approximations of $TS(L,P)$.
These abstractions come useful when the properties expressed in the assertions
cannot be represented exactly in the abstract domain. Note that they are always
computable by choosing the closest element in the abstract domain, and at the
limit $\bot$ is a trivial success subset of any property formula and $\top$ is a
trivial success superset of any property formula.

\vspace{-4mm}
\subsection{Including guidance in the fixpoint algorithm.}

In Fig.~\ref{fig:alg_prec} we present a version of \PLAIs (from
Fig.~\ref{alg:naive}) that includes our proposed modifications to apply
assertions during analysis. The additions to the algorithm are calls to
functions {\tt apply\_succ} and {\tt apply\_call}, that guide analysis results
with the information of the assertion conditions, and $\errors$, an
analysis-like set of triples representing inferred states before applying the
assertions that will be used to check whether the assertions provided by the
user could be proved by the analyzer (see Sect.~\ref{sec:errors}).
Success conditions are applied ({\tt apply\_succ}) after the body of the clause
has been abstractly executed. It
receives an atom $G$ and $\lambda^c$ as parameters to decide correctly
which success conditions have to be applied.
Call conditions are applied ({\tt apply\_call}) before calling function {\tt
  solve}. Otherwise, a less precise call pattern will be captured during the
procedure (it adds new entries to the table).
The last addition, $\errors$,
collects tuples to be used later to
check that the assertions were correct (see Sect.~\ref{sec:errors}).
We collect all success constraints before applying any success
conditions (line \ref{alg:esucc} of Fig.~\ref{fig:alg_prec}) and
all call constraints before applying any call condition (line
\ref{alg:ecalls} of Fig.~\ref{fig:alg_prec}).

\begin{figure}[t]
  \textsc{Algorithm} \textbf{\textsc{Guided\_analyze}}($P, \Q_\alpha$)
  
  \textbf{input:} $P, \Q_\alpha$
  \textbf{global output:} $A \leftarrow \emptyset, \trustf{\errors} \leftarrow \emptyset$
  \begin{algorithmic}[1]
    \State $a[G_i,\lambda_i] \leftarrow \bot$ {\bf for all}
    $\CDP{G_i}{\lambda_i} \in \{\CDP{G}{\lambda^t} | \lambda^t =$ \trustf{\tt
      apply\_call}$(G,\lambda), \CDP{G}{\lambda} \in \Q_\alpha\}$

    \State {$\changes \leftarrow$ true}

    \While{$\changes$}
    \State $\changes \leftarrow$ false
    \State $W \leftarrow \{(G,\lambda^c,\cl)\ |\
    a[G,\lambda^c]$ is defined $\wedge\ \cl \in P\ \wedge \exists
    \sigma$ s.t. $G = \sigma(\cl.{\tt head})\}$

    \ForEach{$(G,\lambda^c,\cl) \in W$}
    \State $\lambda^t \leftarrow$ {\tt abs\_call}$(G,\lambda^c,\cl.{\tt head})$
    \State $\lambda^t \leftarrow$ {\tt
      solve\_body}$(\cl.{\tt body}, \lambda^t)$
    \State $\lambda^{s_0} \leftarrow {\tt abs\_proceed}(G,\cl.{\tt
      head},\lambda^t)$
    \State $\lambda^{s_1} \leftarrow {\tt
      abs\_generalize}(\lambda^{s_0}, \{a[G,\lambda^c]\})$
    \State \label{alg:esucc}
    \trustf{$\errors \leftarrow \errors \cup \{\tuple{G,\lambda^c,\lambda^{s_1}}\}$}
    \State {\trustf{$\lambda^{s}$} $\leftarrow$ \trustf{{\tt apply\_succ}}
      $(G,\lambda^c,\lambda^{s_1})$} 
    \If{\trustf{$\lambda^{s}$} $\neq a[G,\lambda^c]$}
    \State   $a[G,\lambda^c] \leftarrow {\color{orange} \lambda^{s}}$, \label{alg:ins_succ}
        $\changes \leftarrow$ true
            \Comment{Fixpoint not reached yet}
    \EndIf
    \EndFor 
    \EndWhile
    
    \Function{\tt solve\_body}{$B, \lambda^t$}
    \ForEach {$L \in B$}
    \State {$\lambda^c \leftarrow$ {\tt \project}$(L,\lambda^t)$}
    \State $\call = \{\lambda\ |\ a[H,\lambda'] $ is defined $\wedge\ \exists
    \sigma$ s.t. $\sigma(H)=L \wedge \lambda = \sigma(\lambda') \}$
    \State $\lambda^{c'} \leftarrow$ {\tt abs\_generalize}$(\lambda^c, \call)$
    \State \label{alg:ecalls} \trustf{$\errors \leftarrow \errors \cup \{\tuple{L,\lambda^{c'},\_}\}$}
    \State {\trustf{$\lambda^{c'}$} $\leftarrow$ \trustf{{\tt
          apply\_call}}$(L,\lambda^c)$}
    \State $\lambda^s \leftarrow $ {\tt solve}$(L,{\color{orange}\lambda^{c'}})$
    \State $\lambda^t \leftarrow$ {\tt \extend}$(L,\lambda^s,\lambda^t)$
    \EndFor
    \State \Return $\lambda^t$
    \EndFunction
  \end{algorithmic}
\vspace*{-2mm}
  \caption{Fixpoint analysis algorithm using \trustf{\bf assertion conditions}.}
  \label{fig:alg_prec}
\end{figure}

\medskip
Assuming that we are analyzing program $P$ and the applicable assertion
conditions are stored in \conds, the correct application of assertions is
described in Fig.~\ref{fig:apply_assrt}. Flag {\tt \accel} controls if
assertions are used to recover accuracy or to (possibly) speed up fixpoint
computation.

\noindent
\textbf{Applying call conditions.}
Given an atom $G$ and an abstract
call constraint $\lambda^c$, if there is a call assertion condition for
$G$, if {\tt \accel} is true, $\tsp{\pre, P}$ is used
directly, otherwise the operation $\tsp{\pre, P} \sqcap \lambda^c$ will
prune from the analysis result the (abstracted) states that are outside the
precondition. An over-approximation has to be made, otherwise we may remove
calling states that the user did not specify.

\begin{figure}[t]
  \textbf{global flag:} {\tt \accel}
  \begin{algorithmic}[1]
    \Function {\tt apply\_call}{$L, \lambda^c$}
    \If {$\exists \sigma, \lambda^t =\tsp{\sigma(\pre), P}$ s.t.
      $\calls(H,\pre) \in \conds, \sigma(H) =L$} \label{alg:call_cond}
    \State {\bf if} {{\tt \accel}} {\bf return} $\lambda^t$ {\bf else
      return}
    $\lambda^c \sqcap \lambda^t$ \label{alg:apply_call}

    \Else \ \Return $\lambda^c$
    \EndIf
    
    \EndFunction

    \Function {\tt apply\_succ}{$G, \lambda^c, \lambda^{s_0}$}
    \State $app= \{\lambda\ |\ \exists\
    \sigma, \success(H,\pre,\post) \in \conds, \sigma(H) = G,$
    \\ \hspace{20mm}
    $\lambda = \tsp{\sigma(\post), P}, \tsl{\sigma(\pre), P} \sqsupseteq \lambda^c\}$
        
    \If {$app \neq \emptyset$}
    \State $\lambda^t = \bigsqcap app$
    \State {\bf if} {\texttt{\accel}} {\bf return} $\lambda^t$ {\bf else return}
    $\lambda^t \sqcap \lambda^{s_0}$
    \Else\ \Return $\lambda^{s_0}$
    \EndIf
    \EndFunction
  \end{algorithmic}
\vspace*{-2mm}
  \caption{Applying assertions.}
  \label{fig:apply_assrt}
\vspace*{-5mm}
\end{figure}

\noindent
\textbf{Applying success conditions.} Given an atom $G$, an abstract call
constraint $\lambda^c$ and its corresponding abstract success constraint
$\lambda^s$, all success conditions whose precondition applies ($\lambda^c
\sqsubseteq \lambda^-_{TS(\pre, P)}$) are collected in $app$. Making an
under-approximation of \pre is necessary to consider the application of the
assertion condition only if it would be applied in the concrete executions of
the program. An over-approximation of \post needs to be performed since
otherwise success states that actually happen in the concrete execution of the
program may be removed. If no conditions are applicable (i.e., $app$ is empty),
the result is kept as it was. Otherwise, if the flag {\tt \accel} is true
$\tsp{\post,P}$ is used, as it is; otherwise, it is used to refine the value of
the computed answer $\lambda^s$.

\noindent
Applying assertion conditions bounds the extrapolation (widening) performed by
{\tt abs\_generalize}, avoiding unnecessary precision losses. %
Note that the existence of guidance assertions for a predicate does not save
having to analyze the code of the corresponding predicate if it is available,
since otherwise any calls generated within that predicate would be omitted and
not analyzed for, resulting in an incorrect analysis result.

\vspace{-4mm}
\subsection{Fundamental properties of analysis guided by assertions}

We claim the following properties for analysis of a program $P$ applying
assertions as described in the previous sections. The inferred abstract
execution states are covered by the call and (applicable) success assertion
conditions.

\begin{lemma}{\textbf{Applied call conditions.}}\label{th:call}
  Let $\calls(H,\pre)$ be an assertion condition from program $P$, and let
  $\tuple{G,\lambda^c, \lambda^s}$ be a triple derived for $P$ and initial
  queries $\Q_\alpha$ by \textsc{Guided\_analyze}$(P,Q_\alpha)$. If
  $G=\sigma(H)$ for some renaming $\sigma$ then $\lambda^c \sqsubseteq
  \tsp{\sigma(\pre), P}$.  
\end{lemma}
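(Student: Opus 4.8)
The plan is to reduce the lemma to a purely local property of the function \texttt{apply\_call}: whenever it is invoked on an atom that matches a calling condition $\calls(H,\pre)$, its output lies below $\tsp{\sigma(\pre),P}$. The global statement then follows once we observe that every call constraint recorded in $A$ is in fact such an output.

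First I would establish the structural invariant that for every triple $\tuple{G,\lambda^c,\lambda^s}$ ever present in $A$, the call constraint $\lambda^c$ is the value returned by some invocation of \texttt{apply\_call}$(G,\cdot)$. There are only two sites at which entries enter $A$ in \textsc{Guided\_analyze} (Fig.~\ref{fig:alg_prec}): the initialization step, where $a[G_i,\lambda_i]\leftarrow\bot$ with $\lambda_i=\texttt{apply\_call}(G,\lambda)$ for $\CDP{G}{\lambda}\in\Q_\alpha$; and line~\ref{alg:ins_call} of \texttt{solve} (Fig.~\ref{fig:solve}), which inserts $a[L,\lambda]\leftarrow\bot$ with the second argument being exactly the $\lambda^{c'}$ returned by \texttt{apply\_call}$(L,\lambda^c)$ in \texttt{solve\_body}. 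Since the main loop only ever overwrites the success component (line~\ref{alg:ins_succ}) and never the call component, a call constraint is immutable once stored, so the invariant persists to the final result.

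Next I would discharge the local property by case analysis on \texttt{apply\_call}$(G,\lambda^c_0)$ (Fig.~\ref{fig:apply_assrt}). Under the hypothesis $\calls(H,\pre)\in\conds$ with $G=\sigma(H)$, the guard on line~\ref{alg:call_cond} is satisfied, so the function returns $\tsp{\sigma(\pre),P}$ when \texttt{\accel} is set and $\lambda^c_0\sqcap\tsp{\sigma(\pre),P}$ otherwise. In the former case the result equals $\tsp{\sigma(\pre),P}$; in the latter it is $\sqsubseteq\tsp{\sigma(\pre),P}$ by the definition of the meet. Either way $\lambda^c\sqsubseteq\tsp{\sigma(\pre),P}$, as required. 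One point needing care is the renaming: the $\sigma$ in the statement need not be the one matched internally by \texttt{apply\_call}. Here I would use that $H$ is a normalized atom, so any renaming sending $H$ to the fixed atom $G$ is uniquely determined on $Var(H)$; since $\pre$ mentions only variables of $H$, both renamings agree on $\pre$ and hence yield the same $\tsp{\sigma(\pre),P}$. By Def.~\ref{def:assr_cond} there is moreover exactly one calling condition per predicate, so no ambiguity arises as to which condition is matched.

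The main obstacle is not the lattice inequality, which is a one-liner, but making the structural invariant airtight. I must verify that no call constraint reaches $A$ by any other route -- in particular that the two lookup branches of \texttt{solve} return without inserting, that the \texttt{abs\_generalize} step on the call side of \texttt{solve\_body} feeds its result into \texttt{apply\_call} rather than directly into $A$, and that call components are genuinely never rewritten. Once this bookkeeping is pinned down, the proof is immediate.
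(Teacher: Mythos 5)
Your proof is correct and follows essentially the same route as the paper's: a local argument that \texttt{apply\_call} always returns a value $\sqsubseteq \tsp{\sigma(\pre),P}$ (whether \texttt{\accel} is on or off), combined with the invariant that every call constraint entering $A$ --- at initialization, or via line~\ref{alg:ins_call} of \texttt{solve} --- is an output of \texttt{apply\_call}, while the main loop (line~\ref{alg:ins_succ}) only rewrites success components. Your additional care about the renaming $\sigma$ and the immutability of call components makes explicit two bookkeeping points the paper's proof leaves implicit, but the decomposition is the same.
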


\begin{proof}
  Function {\tt apply\_call} obtains in $\lambda^t$ the trusted value for the
  call. It restricts the encountered call $\lambda^c$ or uses it as is, in any
  case $\lambda^c \sqsubseteq \lambda^t = \tsp{\pre, P}$. Hence if this function
  is applied whenever inferred call patterns are introduced in the analysis
  results, the lemma will hold.
  
  The lemma holds after initialization, since the function is applied before
  inserting the tuples in $A$. Now we reason about how the algorithm changes the
  results. The two spots in which analysis results are updated are in function
  {\tt solve} (line~\ref{alg:ins_call} of Fig.~\ref{fig:solve}) and in the body
  of the loop of the algorithm (line~\ref{alg:ins_succ} of
  Fig.~\ref{fig:alg_prec}).
Function {\tt solve} adds tuples to the analysis whenever new encountered call
patterns are found, it is called right after {\tt apply\_call},
therefore it only inserts call patterns taking into account calls conditions.
The analysis updates made in the body of the loop do not
insert new call patterns, only the recomputed success abstractions for those
already present (previously collected in $W$), therefore all call patterns
encountered are added taking into account the call conditions and the lemma
holds.
$\qed$
\end{proof}

\begin{lemma}{\textbf{Applied success conditions.}}\label{th:succ}
  Let $\success(H,\pre,\post)$ be an assertion condition from program $P$ and
  let $\tuple{G,\lambda^c,\lambda^s}$ be a triple derived for $P$ with
  $\Q_\alpha$ initial queries by \textsc{Guided\_analyze}$(P,\Q_\alpha)$ .
  If $G = \sigma(H)$ for some renaming $\sigma$ then $\lambda^c \sqsubseteq$
  $\tsl{\sigma(\pre), P} \Rightarrow \lambda^s \sqsubseteq \tsp{\sigma(\post),
    P}$.
\end{lemma}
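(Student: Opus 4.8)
The plan is to mirror the structure of the proof of Lemma~\ref{th:call}, tracking how success abstractions enter the analysis results and showing that the success-application step enforces the desired bound. The key observation is that success constraints are written into $A$ in exactly one place --- line~\ref{alg:ins_succ} of Fig.~\ref{fig:alg_prec} --- and the value written there, $\lambda^s$, is always the output of {\tt apply\_succ}. So it suffices to analyze what {\tt apply\_succ}$(G,\lambda^c,\lambda^{s_1})$ returns under the hypothesis $\lambda^c \sqsubseteq \tsl{\sigma(\pre),P}$, and to confirm that no other code path can overwrite $\lambda^s$ with a larger value.

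First I would fix the triple $\tuple{G,\lambda^c,\lambda^s}\in A$ and the assertion condition $\success(H,\pre,\post)\in\conds$ with $G=\sigma(H)$, and assume the antecedent $\lambda^c \sqsubseteq \tsl{\sigma(\pre),P}$. I then examine the set $app$ computed in {\tt apply\_succ}. The membership condition there is precisely $\tsl{\sigma(\pre),P} \sqsupseteq \lambda^c$, which is exactly our antecedent; hence this success condition contributes $\tsp{\sigma(\post),P}$ to $app$, so $app \neq \emptyset$ and this particular element $\tsp{\sigma(\post),P}$ is one of the conjuncts in $\lambda^t = \bigsqcap app$. By the defining property of meet, $\lambda^t \sqsubseteq \tsp{\sigma(\post),P}$. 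Now split on the flag {\tt \accel}: if it is true, {\tt apply\_succ} returns $\lambda^t$, and if it is false it returns $\lambda^t \sqcap \lambda^{s_1}$; in either case the returned value is $\sqsubseteq \lambda^t \sqsubseteq \tsp{\sigma(\post),P}$, using $\lambda^t \sqcap \lambda^{s_1} \sqsubseteq \lambda^t$ in the second case. Thus the value written into $a[G,\lambda^c]$ at line~\ref{alg:ins_succ} already satisfies the bound.

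To finish, I would argue that this bound is preserved as an invariant throughout execution. The success component of a triple is set only at line~\ref{alg:ins_succ}, and that line unconditionally routes the value through {\tt apply\_succ}; the initialization step sets success values to $\bot$, which trivially satisfies $\bot \sqsubseteq \tsp{\sigma(\post),P}$. Since every subsequent update to the success component goes through the same guarded assignment, the property holds for the final $\lambda^s$ of any triple derived by \textsc{Guided\_analyze}. One subtlety to state carefully is the role of the renaming: the condition in $app$ is written in terms of some $\sigma'$ with $\sigma'(H)=G$, and I must note that since $\sigma$ is a renaming the constraint $\sigma(\pre)$ and the one recovered inside {\tt apply\_succ} agree up to variable names, so the $\lambda^c \sqsubseteq \tsl{\sigma(\pre),P}$ hypothesis indeed matches the membership test.

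The main obstacle I anticipate is not the meet/join manipulation --- that part is routine monotonicity bookkeeping --- but the careful handling of the implication's antecedent. Unlike Lemma~\ref{th:call}, whose conclusion is unconditional, here the bound on $\lambda^s$ holds only \emph{when} $\lambda^c \sqsubseteq \tsl{\sigma(\pre),P}$. I must make sure that the under-approximation $\tsl{\sigma(\pre),P}$ in the hypothesis lines up exactly with the membership guard in $app$, and that when the antecedent fails the lemma is vacuously true, so no separate argument is needed for triples whose call constraint does not satisfy the precondition.
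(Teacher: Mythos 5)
Your proof follows essentially the same route as the paper's: you analyze what {\tt apply\_succ} returns when the antecedent holds (the membership guard defining $app$ is exactly $\lambda^c \sqsubseteq \tsl{\sigma(\pre),P}$, so $\tsp{\sigma(\post),P}$ is one of the conjuncts of $\lambda^t = \bigsqcap app$, bounding the returned value in both branches of the {\tt \accel} flag), and you then argue by invariant that every success abstraction ever stored either passed through {\tt apply\_succ} or is $\bot$. The meet manipulation, the treatment of the renaming, and the observation that the lemma is vacuous when the antecedent fails are all correct and match the paper.

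There is, however, one inaccuracy in your exhaustiveness claim. You assert that success constraints are written into $A$ ``in exactly one place,'' line~\ref{alg:ins_succ} of Fig.~\ref{fig:alg_prec}, and later that ``every subsequent update to the success component goes through the same guarded assignment.'' That is false: the {\tt solve} function (line~\ref{alg:ins_call} of Fig.~\ref{fig:solve}) also inserts tuples into $A$, performing $a[L,\lambda] \leftarrow \bot$ for newly encountered call patterns, and that insertion does \emph{not} pass through {\tt apply\_succ}. The paper's proof handles this case explicitly, noting that {\tt solve} initializes the success value of new calls to $\bot$, ``so it is the same situation as when initializing.'' The omission is benign --- the value inserted there is $\bot$, which satisfies $\bot \sqsubseteq \tsp{\sigma(\post),P}$ trivially, exactly the argument you already use for the initialization step --- but as written your induction rests on a false premise about where $A$ is modified, so you should add this third case to make the invariant argument airtight.
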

\begin{proof}
  Function {\tt apply\_succ} computes the $\sqcap$ of all applicable assertion
  conditions (checking $\lambda^c \sqsubseteq \tsl{\pre,P}$), if existing. Since
  we make the $\sqcap$ of all applied conditions, $\lambda^s \sqsubseteq
  \bigsqcap \tsp{\post_i, P} \sqsubseteq \tsp{\post, P}$ for any \post. Hence if
  all results inserted in the analysis result have been previously processed by
  {\tt apply\_succ} the lemma holds.
  The lemma holds for the initialized results, because $\lambda^s = \bot
  \sqsubseteq \tsp{\post, P}$ for any \post.
  Now we reason about how the algorithm changes the results. We have the same
  points in the algorithm that change the analysis result as in the proof of
  Lemma~\ref{th:call}. The {\tt solve} function initializes $\lambda^s$ of the
  newly encountered calls with $\bot$, so it is the same situation as when
  initializing. In the body of the loop {\tt apply\_succ} is always called
  before updating the value in the result and the lemma holds. $\qed$
\end{proof}

\vspace{-4mm}
\section{Checking correctness in a guided analysis}
\label{sec:errors}
\vspace{-2mm}

We discuss how assertions may introduce errors in the analysis, depending on
their status.
\sample assertions are not used by the analyzer. Any part of the execution
stopped by them will conservatively be considered to continue, keeping the
analysis safe.
{\tt check} assertions stop the execution of the program if
the properties of the conditions are not met. Hence it is safe to
narrow the analysis results using their information. 
Last, {\tt trust} assertions are not considered during the concrete executions,
so they may introduce errors. Such assertion conditions express correct
properties if they comply with the following definitions:

\vspace*{-2mm}
\begin{definition}[Correct call condition]\label{def:corr_call}
  Let P be a program with an assertion condition $C = \calls(H,\pre)$. $C$ is
  correct for a query $\Q$ to $P$ if for any predicate descriptor $G$, s.t. $G =
  \sigma(H)$ for some renaming $\sigma$, $\forall \theta^c \in \callingcontext,$ \\
  $\theta^c \in \gamma(\tsp{\sigma(\pre),P})$.
\end{definition}

\begin{definition}[Correct success condition]\label{def:corr_succ}
  Let $P$ be a program with an assertion condition $C = \success(H,\pre,\post)$.
  $C$ is correct for $P$ if for any predicate descriptor $G$, s.t. $G =
  \sigma(H)$ for some renaming $\sigma$, $\theta^c \in \gamma(\tsl{\sigma(\pre), P}),
  \theta^s \in \answers(P,\{\CDP{G}{\theta^c}\}) \Rightarrow \theta^s \in  \gamma(\tsp{\sigma(\post), P})$.
\end{definition}

\begin{theorem}{{\bf Correctness modulo assertions.}}\label{th:full}
  Let $P$ be a program with \emph{correct} assertion conditions $C$ and
  $\Q_\alpha$ a set of initial abstract queries. Let $\Q$ be the set of concrete
  queries: $\Q = \{\CDP{G}{\theta} \mid\ \theta\in\gamma(\lambda) \wedge
  \CDP{G}{\lambda} \in \Q_\alpha\}$.

  \noindent
  The analysis result $A = \{\tuple{G_1,\lambda_1^c, \lambda_1^s}, \ldots,
  \tuple{G_n,\lambda_n^c,\lambda_n^s}\}$ computed with \\
  \textsc{Guided\_analyze}$(P, \Q_\alpha)$
  is \emph{correct} (Def.~\ref{def:correct}) for $P, \Q$.
\end{theorem}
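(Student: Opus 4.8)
The plan is to show that the two covering properties of Def.~\ref{def:correct} survive the narrowing steps that distinguish \textsc{Guided\_analyze} from the baseline algorithm. I would \emph{not} try to relate the guided result to a separate baseline run: a narrowed success value is propagated by {\tt \extend} and {\tt abs\_generalize} into other clauses and can change which call patterns are later generated, so the whole fixpoint shifts. Instead I would replay, directly on the guided iteration, the standard soundness argument for goal-dependent abstract interpretation that underlies Thm~\ref{th:basic}, carrying the invariant ``every concrete call and answer occurring in the execution of $\Q$ is covered by the current table $A$'' through every update, and re-establishing it at the two new operations {\tt apply\_call} and {\tt apply\_succ} using the hypothesis that the assertion conditions are correct (Def.~\ref{def:corr_call} and Def.~\ref{def:corr_succ}). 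Concretely, I would argue by induction on the depth of the nodes of the concrete \andtrees in \csemantics: assuming each nested concrete call is already covered by some triple of $A$, I show that the parent call and its answer are covered as well.

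For the first bullet (call coverage), take $G,\theta^c \in \callingcontext$. Replaying the baseline argument, the abstract operations {\tt abs\_call}, {\tt \project} and {\tt abs\_generalize} produce for this node a call pattern that over-approximates $\theta^c$; the only new effect is the replacement of this pattern by {\tt apply\_call}$(L,\lambda^c)$. When {\tt apply\_call} fires, it returns either $\tsp{\sigma(\pre),P}$ (if {\tt \accel} is set) or $\lambda^c \sqcap \tsp{\sigma(\pre),P}$. By correctness of the call condition (Def.~\ref{def:corr_call}) we have $\theta^c \in \gamma(\tsp{\sigma(\pre),P})$, and since $\theta^c$ also lies in $\gamma(\lambda^c)$ by the induction hypothesis, soundness of the abstract meet gives $\theta^c \in \gamma(\lambda^c \sqcap \tsp{\sigma(\pre),P})$. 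In both cases the narrowed pattern still covers $\theta^c$, so the guided table records a triple for $G$ whose call constraint covers $\theta^c$, and the nested calls generated beneath it are covered by the induction hypothesis.

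For the second bullet (success coverage), fix a triple $\tuple{G,\lambda^c,\lambda^s}\in A$, a concrete call $\theta^c \in \gamma(\lambda^c)$, and an answer $\theta^s \in \answers(P,\{\CDP{G}{\theta^c}\})$. The baseline argument covers $\theta^s$ by the pre-narrowing value $\lambda^{s_1}$. The new step is {\tt apply\_succ}, which meets the postconditions $\tsp{\sigma(\post),P}$ of exactly those conditions with $\lambda^c \sqsubseteq \tsl{\sigma(\pre),P}$. For such a condition, monotonicity of $\gamma$ gives $\theta^c \in \gamma(\lambda^c) \subseteq \gamma(\tsl{\sigma(\pre),P})$, so the premise of Def.~\ref{def:corr_succ} is met and $\theta^s \in \gamma(\tsp{\sigma(\post),P})$. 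Hence $\theta^s$ lies in $\gamma$ of every collected postcondition and, by the baseline argument, of $\lambda^{s_1}$; by soundness of $\sqcap$ it therefore lies in $\gamma(\bigsqcap app)$ (the {\tt \accel} branch) and in $\gamma(\bigsqcap app \sqcap \lambda^{s_1})$ (otherwise), i.e.\ in $\gamma(\lambda^s)$. The applicability test $\lambda^c \sqsubseteq \tsl{\sigma(\pre),P}$ is what makes this work: through the under-approximation property $\gamma(\tsl{\sigma(\pre),P}) \subseteq TS(\sigma(\pre),P)$ (Def.~\ref{def:trivial-suc-set}) it guarantees that the precondition genuinely holds for every concrete call covered by $\lambda^c$, matching the premise of Def.~\ref{def:corr_succ} exactly.

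The main obstacle I expect is the feedback between narrowing and the fixpoint: because narrowing alters which call patterns are later generated, the covering invariant must be maintained \emph{simultaneously} over the fixpoint iterations and over the depth of the concrete derivations, rather than applied once to a finished table. Establishing that the narrowed values still cover concrete behavior at every intermediate step, and that monotonicity of the domain operations propagates this coverage upward through the tree, is the technical heart; Lemmas~\ref{th:call} and~\ref{th:succ} supply the bound on the abstract side, while Def.~\ref{def:corr_call} and Def.~\ref{def:corr_succ} supply the matching containment on the concrete side. A secondary point is merely to discharge both settings of the {\tt \accel} flag uniformly, which the argument above already does.
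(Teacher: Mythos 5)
Your proposal is correct and follows essentially the same route as the paper's own proof: both replay the baseline soundness invariant on the pre-narrowing values (what the paper records in its set $\errors$, and you call $\lambda^c$ and $\lambda^{s_1}$), then show that {\tt apply\_call} and {\tt apply\_succ} preserve concrete coverage by combining Def.~\ref{def:corr_call}/Def.~\ref{def:corr_succ} with the Galois-connection fact $\gamma(a)\cap\gamma(b)\subseteq\gamma(a\sqcap b)$, handling both settings of the {\tt \accel} flag. Your version is in fact more explicit than the paper's about the fixpoint/and-tree induction that justifies why the unrefined values cover the concrete behavior in the \emph{guided} (not baseline) fixpoint, a step the paper asserts only ``by algorithm''; the side remark about $TS$ under-approximation is unnecessary (the premise of Def.~\ref{def:corr_succ} is already stated via $\gamma(\tsl{\sigma(\pre),P})$) but harmless.
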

\newcommand{\CCP}{\ensuremath{\theta^c}}
\newcommand{\CP}{\ensuremath{\lambda^c}}
\newcommand{\CAP}{\ensuremath{\theta^s}}
\newcommand{\AP}{\ensuremath{\lambda^s}}
\newcommand{\atuple}{\ensuremath{\tuple{G,\CP,\AP}}}
\newcommand{\etuple}{\ensuremath{\tuple{G,\CP_E,\AP_E}}}
\newcommand{\tsla}[1]{\ensuremath{\lambda^-_{#1}}}
\newcommand{\tspa}[1]{\ensuremath{\lambda^+_{#1}}}

\begin{proof}
  For conciseness in the proof we omit the renaming part. Fixed program $P$,
  given an abstract description $d$ from an assertion (\pre or \post), let
  $\lambda^-_{d} = \tsl{d,P}, \lambda^+_{d} = \tsp{d,P}$. If there are no
  assertion conditions, the theorem trivially holds (Thm.~\ref{th:basic}). If
  assertion conditions are used to generalize, the theorem also holds because
  $\CP = \tspa{\pre}$ and $\AP = \tspa{\post}$ are by definition
  (Def.~\ref{def:corr_call}, Def.~\ref{def:corr_succ}, respectively) correct
  over-approximations. If assertion conditions are used to regain precision:

  \noindent  \textbf{Call}: We want to prove that \\
  $\forall G, \theta^c \in \callingcontext$
    $\exists \tuple{G,\lambda^c,\lambda^s}\in A$
    s.t.\ $\theta^c \in \gamma(\lambda^c)
  \text{(Def.~\ref{def:correct})}.$
  \begin{align*}
      \text{We applied: }&  \calls(G,\pre)
    \\  & \CCP \in \gamma(\tspa{\pre}) \tag{by Def.~\ref{def:corr_call}}
    \\ \text{In\ } E:\ & \exists \etuple \in E,  \CCP \in \gamma(\CP_E) \tag{by algorithm (Fig.~\ref{fig:alg_prec} line~\ref{alg:ecalls})}
    \\ \text{Then:\ } & \CCP \in \gamma(\CP_E) \cap \gamma(\tspa{\pre}) \subseteq \gamma(\alpha(\gamma(\CP_E) \cap \gamma(\tspa{\pre}))) \subseteq \gamma(\CP_E \sqcap \tspa{\pre})
    \\ & \CCP \in \gamma(\CP_E \sqcap \tspa{\pre}) = \gamma(\CP) \tag{by algorithm (Fig.~\ref{fig:apply_assrt} line~\ref{alg:apply_call})}
\end{align*}
\noindent
\textbf{Success}: We want to prove that \\
$\forall \tuple{G,\lambda^c,\lambda^s} \in A, \forall
    \theta^c\in\gamma(\lambda^c)$ if $\theta^s \in \answers(P,
    \{\CDP{G}{\theta^c}\})$
    then $\theta^s\in\gamma(\lambda^s)$. 
\begin{align*}
  \text{We applied: }&  \success(G,\pre_i, \post_i)
  \\ & \CP \sqsubseteq \tsl{\pre_i} \implies \AP \sqsubseteq \tspa{\post_i} \tag{by Lemma~\ref{th:succ}}
  \\ & \CCP \in \gamma(\tsla{\pre_i}), \CAP \in \answers(P,\{\CDP{G}{\CCP}\}) \implies \CAP \in \tspa{\post_i}  \tag{by Def.~\ref{def:corr_succ}}
  \\ & \lambda^p = \bigsqcap \{\tspa{\post}\ |\ \success(G,\pre,\post), \forall \CCP \in \lambda^c, \CCP \in \gamma(\tsla{\pre})\}
  \\ & \CCP \in \gamma(\CP),  \CAP \in \answers(P,\{\CDP{G}{\CCP}\}) \implies \CAP \in \lambda^p
  \\ & \exists \etuple \in E, \text{ s.t. }  \CP \sqsupseteq \gamma(\CP_E) \tag{unrefined abstractions}
  \\ \text{We have: } & \CAP \in \gamma(\AP_E), \CAP \in \gamma(\lambda^p)
  \\ & \CAP \in \gamma(\AP_E) \cap \gamma(\lambda^p) \subseteq  \gamma(\alpha(\gamma(\AP_E) \cap \gamma(\lambda^p))) \subseteq \gamma(\AP_E \sqcap \lambda^p)
  \\ & \CAP \in \gamma(\AP_E \sqcap \lambda^p) = \gamma(\AP)\ &\square
\end{align*}
\end{proof}

\noindent
In other words, Theorem~\ref{th:full} and Lemmas \ref{th:call} and \ref{th:succ}
ensure that correct assertion conditions bound imprecision in the result,
without affecting correctness. By applying the assertion conditions no actual
concrete states are removed from the abstractions.

We can identify suspicious pruning during analysis. Let $\lambda^a$ be the
correct approximation of a condition and $\lambda$ be an inferred abstract
state, typically a value in the tuples of $\errors$.
If $\lambda \sqcap \lambda^a = \bot$ the inferred information is incompatible
with that in the condition, therefore it is likely that the assertion is
\emph{erroneous}.
$\lambda \not\sqsubseteq \lambda^a$ indicates that the algorithm inferred more
concrete constraint states than described in the assertion and the analysis
results may be wrong.
These checks can be performed while the algorithm is run or off-line, by
comparing the properties of the assertion conditions against the triples stored
in $\errors$, which, as mentioned earlier, stores partial analysis results with
no assertions applied. A full description of this checking procedure is
described
in~\cite{assrt-theoret-framework-lopstr99-short,optchk-journal-scp-short}.

\vspace{-4mm}
\section{Related work}
\label{sec:related-work}
\vspace{-2mm}
The inference of arbitrary semantic properties of programs is known to
be both undecidable and expensive, requiring user
interaction in many realistic
settings.
Abstract interpreters allow the \textbf{selection of different domains and
parameters for such domains} (e.g., polyhedra, octagons, regtypes with
depth-k, etc.), as well as their widening operations (e.g., type
shortening, structural widening, etc.).
Other parameters include policies for partial evaluation and other
transformations (loop unrolling, inlining, slicing, etc.).
These parameters are orthogonal or complementary to the issues
discussed in this paper.
To the extent of our knowledge the use of \textbf{program-level
  annotations} 
(such as assertions) to guide abstract interpretation has not been
widely studied in the literature, contrary to their (necessary) use in
verification and theorem proving approaches.
The \textbf{Cibai}~\cite{Logozzo-vmcai07-short} system includes 
\emph{trust-style} annotations while sources are processed to encode
some predefined runtime semantics.
In~\cite{gopan2007guided-short} analysis is guided by modifying the analyzed
program to restrict some of its behaviors. However, this guidance affects the
\emph{order of program state exploration}, rather the analysis
results,
as in our case.
As mentioned in the introduction, the closest to our approach is
\textbf{Astr\'{e}e}, that allows \emph{assert}-like statements, where
correctness of the analysis is ensured by the presence of compulsory
runtime checks, and trusted (\emph{known facts}) asserts. These refine
and guide analysis operations at program points. Like in CiaoPP, the
analyzer shows errors if a known fact can be falsified
statically. However, as with the corresponding Ciao assertions, while
there has been some examples of use~\cite{DBLP:conf/sas/DelmasS07-short},
there has been no detailed description of how such assertions are
handled in the fixpoint algorithm. We argue that this paper contributes in this
direction.

\vspace{-2mm}
\section{Conclusions}
\vspace{-2mm}

We have proposed a user-guided multivariant fixpoint algorithm that makes use of
check and trust assertion information, and we have provided formal results on
the influence of such assertions on correctness and efficiency. We have extended
the semantics of the guidance (and all) assertions to deal with both the cases
in which the program execution will and will not incorporate run-time tests for
unverified assertions, as well as the cases in which the assertions are intended
for refining the information or instead to lose precision in order to gain
efficiency.
We show that these annotations are not only useful when dealing with incomplete
code but also provide the analyzer with recursion/loop invariants for speeding
up global convergence.

\begin{small}
\renewcommand\refname{\vspace{-3mm}References}
\bibliographystyle{splncs04}
\bibliography{../../../bibtex/clip/clip,../../../bibtex/clip/general}
\end{small}

\end{document}